\newcommand{\bbC}{\mathbb{C}}
\newcommand{\bbE}{\mathbb{E}}\newcommand{\rme}{\mathrm{e}}
\newcommand{\bbI}{\mathbb{I}}
\newcommand{\bbN}{\mathbb{N}}\newcommand{\rmN}{\mathrm{N}}
\newcommand{\bbR}{\mathbb{R}}
\newcommand{\sfA}{\mathsf{A}}
\newcommand{\sfB}{\mathsf{B}}
\newcommand{\sfD}{\mathsf{D}}
\newcommand{\cD}{\mathcal{D}}
\newcommand{\cI}{\mathcal{I}}
\newcommand{\scrN}{\mathscr{N}}
\newcommand{\scrS}{\mathscr{S}}
\newcommand{\supp}{{\mathsf{supp}}}
\theoremstyle{mystyle}
\newtheorem{theorem}{Theorem}
\theoremstyle{mystyle}
\newtheorem{lemma}{Lemma}
\theoremstyle{mystyle}
\theoremstyle{mystyle}
\theoremstyle{mystyle}
\newtheorem{definition}{Definition}
\theoremstyle{remark}
\theoremstyle{mystyle}
\theoremstyle{mystyle}
\theoremstyle{mystyle}
\theoremstyle{discussion}
\theoremstyle{mystyle}
\theoremstyle{mystyle}
\begin{document}

\title{Scalar Gaussian Wiretap Channel: Properties of the Support Size of the Secrecy-Capacity-Achieving Distribution}

\author{
\IEEEauthorblockN{ Luca Barletta$^{*}$,  Alex Dytso$^{**}$}
$^{*}$ Politecnico di Milano, Milano, 20133, Italy. Email: luca.barletta@polimi.it \\
$^{**}$ New Jersey Institute of Technology, Newark,  NJ 07102, USA.
Email: alex.dytso@njit.edu}
\maketitle
\begin{abstract}
This work studies the secrecy-capacity of a scalar-Gaussian 
wiretap channel with an amplitude constraint on the input. It is known that for this channel, the secrecy-capacity-achieving distribution is discrete with finitely many points.   This work improves such result by showing an upper bound of the order $\frac{\sfA}{\sigma_1^2}$ where $\sfA$ is the amplitude constraint and $\sigma_1^2$ is the variance of the Gaussian noise over the legitimate channel.  
\end{abstract}

\section{Introduction} 
Consider the Gaussian wiretap channel with outputs 
\begin{align}
Y_1&= X+N_1,\\
Y_2&=X+N_2,
\end{align}
where $N_1 \sim \mathcal{N}(0,\sigma_1^2)$ and  $N_2 \sim \mathcal{N}(0,\sigma_2^2)$, and with $(X,N_1,N_2)$ independent of each other.  The output $Y_1$ is observed by the legitimate receiver whereas the output $Y_2$ is observed by the malicious receiver.  In this work, we assume that the input $X$  is limited by a peak-power constraint or amplitude constraint given by $|X| \le \sfA$. 
For this setting, the secrecy-capacity is given by 
\begin{align}
C_s(\sigma_1, \sigma_2, \sfA) &= \max_{P_X:  |X| \le \sfA }  I(X; Y_1) - I(X; Y_2) \\
&= \max_{P_X:  |X| \le \sfA }   I(X; Y_1 | Y_2). \label{eq:Secracy_CAP}
\end{align}
We are interested in studying the input distribution $P_{X^\star}$ that maximizes \eqref{eq:Secracy_CAP}.  It can be shown that for $\sigma_1^2  \ge \sigma^2_2$ the secrecy-capacity is equal to zero. Therefore, in the remaining, we assume that $\sigma_1^2 < \sigma^2_2$.

\paragraph{Literature Review} 
The wiretap channel was introduced by Wyner in \cite{wyner1975wire}, who also established the secrecy-capacity  of the degraded wiretap channel. The wiretap channel plays a central role in network information theory; the interested reader is referred to \cite{bloch2011physical,Oggier2015Wiretap,Liang2009Security,poor2017wireless} and reference therein for an in-detail treatment of the topic. 

The secrecy-capacity of a Gaussian wiretap channel with an average power constraint was shown by Leung and Hellman in \cite{GaussianWireTap} where the secrecy-capacity-achieving input distribution was shown to be Gaussian.  The secrecy-capacity of Gaussian wiretap channel with an amplitude and power constraint was considered by Ozel et al. in \cite{ozel2015gaussian} where the author showed that the secrecy-capacity-achieving input distribution is discrete with finitely many points.  The work of \cite{ozel2015gaussian} was extended to noise-dependent channels by Soltani and Rezki in  \cite{soltani2018optical}.  For further studies of the properties of secrecy-capacity-achieving input distribution for a class of degraded wiretap channels, the interested reader is referred to \cite{dytso2018optimal,soltani2021degraded,nam2019secrecy}. 

The classical approach for demonstrating that the secrecy-capacity-achieving distributions are discrete relies on an analytic argument introduced to information theory by Smith in  \cite{smith1971information}. The drawback of this technique is that it does not provide any bounds on the support size of the secrecy-capacity-achieving distribution and only asserts that the support is countable.  In this work, instead of following the approach of \cite{smith1971information}, we follow the approach introduced in \cite{DytsoAmplitute2020}, which relies on the \emph{variation diminishing} property \cite{karlin1957polya}. 

This work has two goals. 
The first goal is to sharpen the results of \cite{ozel2015gaussian} by establishing a firm upper bound on the number of points in the support of the secrecy-capacity-achieving distribution.   The second goal is to study the necessary techniques required to extend the method introduced in \cite{DytsoAmplitute2020} to network information theory problems. The wiretap channel serves as an  ideal first test candidate in this research program. 

\paragraph{Outline and Contributions}
 In what follows: Section~\ref{sec:MainResults} presents our main results, which includes  two new upper bounds  on the cardinality of  the support of the optimal input distribution. 
 Section~\ref{sec:Proof} is dedicated to the proofs.  Section~\ref{sec:Conclusion}  concludes the paper with a discussion on interesting future directions.  

\paragraph{Notation} 
Throughout the paper, the deterministic scalar quantities are denoted by lower-case letters and random variables are denoted by uppercase letters.  

We denote the distribution  of a random variable $X$ by $P_{X}$. The support set of $P_X$ is denoted and defined as
\begin{align}
\supp(P_{X})&=\{x:  \text{ for every open set $ \mathcal{D} \ni x $ } \notag\\ 
&\quad \qquad \text{
 we have that $P_{X}( \mathcal{D})>0$} \}. 
\end{align} 
The relative entropy between distributions $P$ and $Q$ will be denoted by $\sfD(P\|Q)$.  The pdf of a Gaussian random variable with zero mean and variance $\sigma^2$ is denoted by  $\phi_{\sigma}( \cdot)$

Finally, the number of zeros of a function $f \colon \mathbb{R} \to \mathbb{R} $  on the interval $\cI$ is denoted by  $\rmN(\cI, f)$. Similarly, if $f  \colon \bbC \to \bbC$ is a function on the complex domain, $\rmN(\cD, f)$ denotes the number of its  zeros within the region $\cD$.

\section{Main Result} 
\label{sec:MainResults}

In this section, we state our main results.  We first present the  following ancillary lemma the first part of which was shown in \cite{ozel2015gaussian}.

\begin{lemma}\label{lem:KKT} $P_{X^\star}$ maximizes \eqref{eq:Secracy_CAP} if and only if 
\begin{align}
\Xi(x) &= C_s(\sigma_1, \sigma_2,\sfA), \, x \in \supp(P_{X^\star}),\\
\Xi(x) &\le C_s(\sigma_1, \sigma_2, \sfA), \, x \in  [-\sfA,\sfA] ,
\end{align}
where  for $x \in \mathbb{R}$
\begin{align}
\Xi(x)&=\sfD(f_{Y_1|X}(\cdot|x) \|f_{Y_1^\star})- \sfD(f_{Y_2|X}(\cdot|x) \|f_{Y_2^\star})\\
&=\bbE \left[ g(Y_1) |X=x \right] +\log\left(\frac{\sigma_2}{\sigma_1}\right), \label{eq:Writing_KKT_as_statistics}
\end{align}
and where 
\begin{align} \label{eq:functiong}
g(y)=\bbE\left[\log\frac{f_{Y_2^\star}(y+N)}{f_{Y_1^\star}(y)}\right],  \,  y\in \mathbb{R}, 
\end{align} 
with $N\sim {\cal N}(0,\sigma_2^2-\sigma_1^2)$. 
\end{lemma}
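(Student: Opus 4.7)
The plan is to treat the two halves of the lemma separately. The first half---the KKT-type characterization of $P_{X^\star}$---is already established in \cite{ozel2015gaussian}, so I would simply cite it and only recall the ingredients: since $\sigma_1^2<\sigma_2^2$ we have the degraded representation $Y_2 \stackrel{d}{=} Y_1+N$ with $N\sim\mathcal{N}(0,\sigma_2^2-\sigma_1^2)$ independent of $(X,Y_1)$, so $I(X;Y_1)-I(X;Y_2)=I(X;Y_1|Y_2)$ is concave in $P_X$; a standard Lagrangian / weak-derivative argument over the convex compact set of probability measures on $[-\sfA,\sfA]$ then yields the stated pointwise equality on $\supp(P_{X^\star})$ and inequality on $[-\sfA,\sfA]$.

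The real task is to derive the alternate form \eqref{eq:Writing_KKT_as_statistics}, so that is where I would focus. My first step is to expand each relative entropy in the definition of $\Xi(x)$ as a conditional differential entropy plus a cross-entropy term. Because $N_1$ and $N_2$ are Gaussian and independent of $X$, the conditional entropies $h(Y_1|X=x)$ and $h(Y_2|X=x)$ do not depend on $x$, and their difference contributes exactly the constant $\log(\sigma_2/\sigma_1)$. What remains is
\begin{align*}
\Xi(x)-\log\frac{\sigma_2}{\sigma_1}=\bbE\!\left[\log\frac{f_{Y_2^\star}(Y_2)}{f_{Y_1^\star}(Y_1)}\,\bigg|\,X=x\right].
\end{align*}

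The second step uses the degraded decomposition to pull the expectation over $N$ inside. Conditional on $X=x$, write $Y_2=Y_1+N$ with $N$ independent of $Y_1$, and apply the tower property to obtain
\begin{align*}
\bbE\!\left[\log f_{Y_2^\star}(Y_2)\,\big|\,X=x\right]=\bbE\!\left[\,\bbE_N[\log f_{Y_2^\star}(Y_1+N)]\,\big|\,X=x\right],
\end{align*}
which, after subtracting the $\log f_{Y_1^\star}(Y_1)$ term (which is $\sigma(Y_1)$-measurable), identifies the bracketed integrand with $g(Y_1)$ from \eqref{eq:functiong}. Combining with the previous step yields $\Xi(x)=\bbE[g(Y_1)|X=x]+\log(\sigma_2/\sigma_1)$, as claimed.

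The only analytic point that really needs care is justifying the interchange of expectations and the finiteness of the cross-entropy integrals. This is routine rather than hard: since $f_{Y_i^\star}$ is the convolution of a distribution supported on $[-\sfA,\sfA]$ with a Gaussian, $\log f_{Y_i^\star}$ has at most quadratic growth, while $Y_1$ and $Y_1+N$ have Gaussian tails conditional on $X=x$, so Fubini's theorem applies. In short, beyond citing \cite{ozel2015gaussian} for the variational step, the proof is algebraic bookkeeping exploiting the degraded structure $Y_2=Y_1+N$, and no deeper analytic tools are needed.
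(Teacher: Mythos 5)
Your proposal is correct and follows essentially the same route as the paper: the variational (KKT) part is delegated to \cite{ozel2015gaussian}, and the identity \eqref{eq:Writing_KKT_as_statistics} is obtained by splitting off the constant $\log(\sigma_2/\sigma_1)$ from the entropy terms and exploiting the degraded decomposition $Y_2 \stackrel{d}{=} Y_1+N$ with $N\sim\mathcal{N}(0,\sigma_2^2-\sigma_1^2)$. Your tower-property phrasing is just the probabilistic rendering of the paper's steps \eqref{eq:intro_N}--\eqref{eq:change_var}, where $\phi_{\sigma_2}(y-x)=\bbE[\phi_{\sigma_1}(y-x-N)]$ and the change of variable $y\mapsto y+N$ move the expectation over $N$ inside the logarithmic term, so no substantive difference remains.
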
 
\begin{IEEEproof}
The first part of Lemma~\ref{lem:KKT} was shown in \cite{ozel2015gaussian}. The proof of \eqref{eq:Writing_KKT_as_statistics} goes as follows: 
\begin{align}
&\sfD(f_{Y_1|X}(\cdot|x) \|f_{Y_1^\star})- \sfD(f_{Y_2|X}(\cdot|x) \|f_{Y_2^\star})-\log\left(\frac{\sigma_2}{\sigma_1}\right)\\
&=\int_{-\infty}^{\infty} \log\frac{1}{f_{Y_1^\star}(y)} \phi_{\sigma_1}(y-x) {\rm d}y \notag\\
& \quad -\int_{-\infty}^{\infty} \log\frac{1}{f_{Y_2^\star}(y)} \bbE[\phi_{\sigma_1}(y-x-N)] {\rm d}y \label{eq:intro_N} \\
&=\int_{-\infty}^{\infty} \log\frac{1}{f_{Y_1^\star}(y)} \phi_{\sigma_1}(y-x) {\rm d}y \notag\\
& \quad -\int_{-\infty}^{\infty} \bbE\left[\log\frac{1}{f_{Y_2^\star}(y+N)}\right] \phi_{\sigma_1}(y-x) {\rm d}y \label{eq:change_var} \\
&=\int_{-\infty}^{\infty} \bbE\left[\log\frac{f_{Y_2^\star}(y+N)}{f_{Y_1^\star}(y)}\right] \phi_{\sigma_1}(y-x) {\rm d}y\\
&=\int_{-\infty}^{\infty} g(y) \phi_{\sigma_1}(y-x) {\rm d}y,
\end{align}
where in~\eqref{eq:intro_N} we have introduced $N\sim {\cal N}(0,\sigma_2^2-\sigma_1^2)$; and in~\eqref{eq:change_var} we applied the change of variable $y \mapsto y+N$.  This concludes the proof. 
\end{IEEEproof}

The main result of this paper is summarized in the following theorem.

\begin{theorem}\label{thm:Main_Results}  For $\sfA>0$  
\begin{align}
| \supp(P_{X^\star})| \le \rmN\left([-R,R], g(\cdot)+\kappa_1\right) <\infty \label{eq:Implicit_Upper_BOund}
\end{align}
where 
\begin{align}
\kappa_1&=\log\left(\frac{\sigma_2}{\sigma_1}\right)-C_s,\\
 R&= \sfA \frac{\sigma_2+\sigma_1}{ \sigma_2-\sigma_1} +\sqrt{ \frac{ \frac{\sigma_2^2-\sigma_1^2}{\sigma_2^2}+2C_s}{ \frac{1}{\sigma_1^2}-\frac{1}{\sigma_2^2} } }. 
 \end{align}
 Moreover,
 \begin{align}
 \rmN\left([-R,R], g(\cdot)+\kappa_1\right)  \le   \rho \frac{\sfA^2}{\sigma_1^2} + O( \log(\sfA) ), \label{eq:Upper_Bound_Explicit}
 \end{align} 
where $\rho= (2\rme+1)^2 \left( \frac{\sigma_2+\sigma_1}{ \sigma_2-\sigma_1} \right)^2+ \left(\frac{\sigma_2+\sigma_1}{ \sigma_2-\sigma_1}+1 \right)^2$.
\end{theorem}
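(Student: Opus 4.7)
The plan is to combine the KKT characterization from Lemma~\ref{lem:KKT} with the variation-diminishing property of Gaussian convolution, and then to estimate the number of real zeros of the resulting pre-image by complex-analytic (Jensen-type) techniques.

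\textbf{From KKT to a zero-counting problem.} By Lemma~\ref{lem:KKT} the function $\Xi(x) - C_s$ is non-positive on $[-\sfA,\sfA]$ and vanishes on $\supp(P_{X^\star})$, and by \eqref{eq:Writing_KKT_as_statistics} it equals the convolution $((g+\kappa_1)\ast \phi_{\sigma_1})(x)$ (since $\phi_{\sigma_1}$ integrates to one). Hence every support point contributes a zero of $(g+\kappa_1)\ast\phi_{\sigma_1}$ in $[-\sfA,\sfA]$, giving the chain
\[
|\supp(P_{X^\star})| \;\le\; \rmN\bigl([-\sfA,\sfA],\,(g+\kappa_1)\ast\phi_{\sigma_1}\bigr).
\]

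\textbf{Variation diminishing and identification of $R$.} The Gaussian kernel is strictly totally positive, so the variation-diminishing inequality (in the form used in~\cite{DytsoAmplitute2020}) bounds the number of zeros of a Gaussian convolution by the number of zeros of its pre-image. To convert this into a clean bound on a finite interval I aim to show $g+\kappa_1>0$ strictly outside $[-R,R]$, so no sign change of the pre-image occurs there and the convolution's zeros on $[-\sfA,\sfA]$ are controlled by zeros of $g+\kappa_1$ on $[-R,R]$. The amplitude constraint yields the sandwich
\[
\phi_{\sigma_i}(|y|+\sfA) \;\le\; f_{Y_i^\star}(y) \;\le\; \phi_{\sigma_i}(\max(|y|-\sfA,0)), \quad i\in\{1,2\},
\]
which, plugged into the definition of $g$, produces a quadratic-in-$|y|$ lower bound. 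Using $\bbE[N^2]=\sigma_2^2-\sigma_1^2$ and $\bbE|y+N|\le \sqrt{y^2+\sigma_2^2-\sigma_1^2}$ to control $\bbE[(|y+N|+\sfA)^2]$, then solving the resulting quadratic inequality in $|y|$, yields precisely the $R$ in the statement and establishes \eqref{eq:Implicit_Upper_BOund}.

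\textbf{Explicit count via Jensen's formula.} For \eqref{eq:Upper_Bound_Explicit} I extend $h(z):=g(z)+\kappa_1$ analytically to the complex plane: since $X^\star$ has compact support in $[-\sfA,\sfA]$ and $\phi_{\sigma_i}$ is entire and nonvanishing, $f_{Y_i^\star}$ is entire and has no zeros in a sufficiently large horizontal strip, so $\log f_{Y_i^\star}$, and hence $h$, is holomorphic on a disk of radius proportional to $R$ about the origin. Sandwiching $|f_{Y_i^\star}(z)|$ by complex Gaussians gives the growth estimate $|h(z)|\le c\,|z|^2/\sigma_1^2 + O(\log\sfA)$ on that disk. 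Applying Jensen's formula on a disk of radius $(2\rme+1)R$,
\[
\rmN([-R,R], h) \;\le\; \frac{1}{\log(2\rme+1)}\Bigl(\log\max_{|z|=(2\rme+1)R}|h(z)| - \log|h(0)|\Bigr),
\]
and substituting the quadratic bound with $R=\Theta(\sfA)$ (the multiplicative constant being exactly $(\sigma_2+\sigma_1)/(\sigma_2-\sigma_1)$ from the form of $R$) yields the announced prefactor $\rho$ after collecting terms; the logarithmic residue from $|h(0)|$ produces the $O(\log\sfA)$ correction.

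\textbf{Main obstacle.} The hardest step is pushing variation diminishing rigorously through on a bounded interval, since the standard statements concern sign changes on all of $\bbR$; this is exactly what the tail certificate of Step~2 is for, but pinning down the constants in $R$ (matching the square-root term) requires careful algebra with $\bbE[(|y+N|+\sfA)^2]$. A secondary analytic difficulty is obtaining a uniform lower bound on $|f_{Y_i^\star}(z)|$ throughout the Jensen disk, needed for $\log f_{Y_i^\star}$ to be single-valued and bounded below; this should follow from the compact support of $X^\star$, but the bound must be tight enough to keep $\log|h(0)|$ from degrading the $\sfA^2/\sigma_1^2$ order.
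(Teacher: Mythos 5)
Your treatment of the implicit bound \eqref{eq:Implicit_Upper_BOund} follows the paper's route exactly: the KKT inclusion from Lemma~\ref{lem:KKT}, the oscillation/variation-diminishing theorem for the strictly totally positive Gaussian kernel, and a tail-positivity certificate for $g+\kappa_1$ outside $[-R,R]$. One small algebraic caveat: bounding $\bbE\left[(|y+N|+\sfA)^2\right]$ via $\bbE|y+N|\le\sqrt{y^2+\sigma_2^2-\sigma_1^2}$ is strictly weaker than what the paper does (Jensen applied conditionally on $N$, so that only $\bbE[(y+N-X^\star)^2]=\bbE[(y-X^\star)^2]+\sigma_2^2-\sigma_1^2\le(|y|+\sfA)^2+\sigma_2^2-\sigma_1^2$ is needed), and it yields a slightly larger radius than the stated $R$; this is repairable, but you do not get ``precisely the $R$ in the statement'' as claimed.

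The explicit count \eqref{eq:Upper_Bound_Explicit}, however, has a genuine gap: you apply Jensen's formula to the analytic extension of $h:=g+\kappa_1$ itself, and this fails on two counts. First, $\breve g$ is \emph{not} analytic on a disk of radius proportional to $R$. The density $f_{Y_i^\star}$ is entire of order two and, whenever $X^\star$ has at least two mass points, has infinitely many complex zeros which can approach the real axis as $\sfA$ grows: for two symmetric atoms at $\pm\sfA$ one gets $f(z)\propto \rme^{-(z^2+\sfA^2)/(2\sigma^2)}\cosh(\sfA z/\sigma^2)$, with zeros at $z=\rmi\pi\sigma^2(2k+1)/(2\sfA)$, i.e.\ at distance $O(\sigma^2/\sfA)$ from $\bbR$. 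Hence no uniform lower bound on $|f_{Y_i^\star}(z)|$ over the Jensen disk can hold, $\log f_{Y_i^\star}$ acquires branch points inside any disk of radius $\Theta(\sfA)$, and what you flagged as a ``secondary analytic difficulty'' that ``should follow from the compact support of $X^\star$'' is in fact a fatal obstruction --- compact support is exactly what forces these zeros to exist. Second, Jensen's formula needs a lower bound on $|h(0)|$, and none is available: $g+\kappa_1$ changes sign inside $[-R,R]$ and nothing prevents it from vanishing at or arbitrarily near the origin. There is also an internal inconsistency that signals the problem: if your growth estimate $|h(z)|\le c\,|z|^2/\sigma_1^2+O(\log\sfA)$ held on the disk, your own Jensen inequality would return $\rmN=O(\log\sfA)$ zeros, not $\rho\,\sfA^2/\sigma_1^2$; the quadratic real-axis growth of $g$ does not govern the disk maximum of the relevant analytic object, whose modulus grows like $\exp\bigl(c'(\operatorname{Im}z)^2/\sigma_1^2\bigr)$, and it is precisely $\log\max|\breve h|\approx(2\rme+1)^2R^2/(2\sigma_1^2)$ that generates the $\sfA^2/\sigma_1^2$ term. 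The paper sidesteps all of this by first applying Rolle's theorem, $\rmN([-R,R],g+\kappa_1)\le\rmN([-R,R],g')+1$, and multiplying by the strictly positive factor $\sigma_1^2 f_{Y_1}$ to form $h=\sigma_1^2f_{Y_1}g'$ as in \eqref{eq:productfy2}, which by Tweedie's formula has a representation whose complex extension is controlled on the whole disk (Lemmas~\ref{lem:moduls_upper_bound} and~\ref{lem:moduls_lower_bound}); and by using Tijdeman's lemma (Lemma~\ref{lem:number of zeros of analytic function}), whose denominator is $\max_{|z|\le tR}|\breve h(z)|$ rather than the point value $|\breve h(0)|$. That inner maximum is lower-bounded at the real point $\sfB=R$, where the drift term $\sfB(1/\sigma_1^2-1/\sigma_2^2)$ dominates the conditional-mean terms $\sfA(1/\sigma_1^2+1/\sigma_2^2)$ once $\sfB\ge\sfA(\sigma_2^2+\sigma_1^2)/(\sigma_2^2-\sigma_1^2)$ --- this Rolle--Tweedie--Tijdeman detour is the missing idea in your proposal.
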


\section{Proofs of the Main Results} 
\label{sec:Proof}

\subsection{Proof of the bound in \eqref{eq:Implicit_Upper_BOund}} 
The function $g(\cdot)+\log\left(\frac{\sigma_2}{\sigma_1}\right)-C_s$ will play an important role in our proof in this section.  We start with the following  lemma,  which characterizes the region on which the zeros of the function $g(\cdot)+\log\left(\frac{\sigma_2}{\sigma_1}\right)-C_s$ concentrate. 
\begin{lemma}\label{Lem:boundedsupport}   Let 
\begin{equation}
	\bar{C}_s =  \frac{1}{2} \log \left(  \frac{1+ \frac{\sfA^2}{  \sigma_1^2 } }{1+\frac{\sfA^2}{\sigma_2^2}} \right).  \label{eq:Expression_Cs_power_G_Wire}
	\end{equation} 
	Then, 
	\begin{equation}
	C_s \le \bar{C}_s. 
	\end{equation} 
	Moreover, there exists some $R=R(\sigma_1,\sigma_2,\sfA)<\infty$ such that
	\begin{align}
	& \rmN\left(\mathbb{R}, g(\cdot)+\log\left(\frac{\sigma_2}{\sigma_1}\right)-C_s\right) \notag\\
	&\quad = \rmN\left([-R,R], g(\cdot)+\log\left(\frac{\sigma_2}{\sigma_1}\right)-C_s\right)<\infty.
	\end{align}
	Furthermore, $R$ can be upper-bounded as follows:
	\begin{equation}
	R \le      \sfA d_1 +d_2 \label{eq:Bound_on_R}
	\end{equation}
	where 
	\begin{align}
	d_1&=\frac{\sigma_2+\sigma_1}{ \sigma_2-\sigma_ 1},\\
	d_2&=\sqrt{ \frac{ \frac{\sigma_2^2-\sigma_1^2}{\sigma_2^2}+2 C_s}{ \frac{1}{\sigma_1^2}-\frac{1}{\sigma_2^2} } }  \le \sqrt{ \frac{ \frac{\sigma_2^2-\sigma_1^2}{\sigma_2^2}+2 \bar{C}_s}{ \frac{1}{\sigma_1^2}-\frac{1}{\sigma_2^2} } } .
	\end{align}
\end{lemma}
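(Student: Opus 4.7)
The plan is to prove the three assertions in order. The bound $C_s\le\bar C_s$ is immediate: any input satisfying $|X|\le\sfA$ also satisfies $\bbE[X^2]\le\sfA^2$, so the amplitude-constrained feasible set is contained in the average-power feasible set with variance $\sfA^2$, and the Leung--Hellman formula identifies the maximum secrecy rate over the latter as exactly $\bar C_s$; restricting to the smaller set can only decrease the optimum.

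The localization of the zeros of $g(\cdot)+\kappa_1$ to a finite interval $[-R,R]$ will follow by showing $g(y)+\log(\sigma_2/\sigma_1)>C_s$ whenever $|y|>\sfA d_1+d_2$. To that end I would combine two pointwise bounds. First, because $\supp(P_{X^\star})\subseteq[-\sfA,\sfA]$, the tail bound $f_{Y_1^\star}(y)\le\phi_{\sigma_1}(|y|-\sfA)$ holds for $|y|\ge\sfA$, giving
\[
-\log f_{Y_1^\star}(y)\ge \frac{(|y|-\sfA)^2}{2\sigma_1^2}+\tfrac12\log(2\pi\sigma_1^2).
\]
Second, applying Jensen's inequality to the mixture representation $f_{Y_2^\star}(z)=\bbE_{X^\star}[\phi_{\sigma_2}(z-X^\star)]$ and then averaging over $N\sim\cN(0,\sigma_2^2-\sigma_1^2)$ at $z=y+N$, together with the estimate $\bbE_{X^\star}[(y-X^\star)^2]\le(|y|+\sfA)^2$ and $\bbE[N^2]=\sigma_2^2-\sigma_1^2$, yields
\[
\bbE_N[\log f_{Y_2^\star}(y+N)]\ge -\frac{(|y|+\sfA)^2+\sigma_2^2-\sigma_1^2}{2\sigma_2^2}-\tfrac12\log(2\pi\sigma_2^2).
\]
Summing these two estimates and cancelling the log-normalizers against $\log(\sigma_2/\sigma_1)$ gives
\[
g(y)+\log(\sigma_2/\sigma_1)\ge \frac{(|y|-\sfA)^2}{2\sigma_1^2}-\frac{(|y|+\sfA)^2}{2\sigma_2^2}-\frac{\sigma_2^2-\sigma_1^2}{2\sigma_2^2}.
\]

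A difference-of-squares factorization rewrites the first two terms as
\[
\tfrac12\!\left(\tfrac{1}{\sigma_1^2}-\tfrac{1}{\sigma_2^2}\right)(|y|-\sfA d_1)(|y|-\sfA/d_1),
\]
which, since $d_1>1$, is bounded below by $\tfrac12(1/\sigma_1^2-1/\sigma_2^2)(|y|-\sfA d_1)^2$ whenever $|y|\ge\sfA d_1$. Demanding that the resulting lower bound exceed $C_s$ becomes $(|y|-\sfA d_1)^2\ge d_2^2$, i.e., $|y|\ge\sfA d_1+d_2$; any zero of $g(\cdot)+\kappa_1$ must then lie in the complementary interval, giving $R\le\sfA d_1+d_2$. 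Finiteness on $[-R,R]$ will follow from real-analyticity: $f_{Y_j^\star}$ is a strictly positive analytic Gaussian convolution of $P_{X^\star}$, so $\log f_{Y_j^\star}$ is real-analytic on $\bbR$, and the $N$-average preserves analyticity (differentiation under the integral is justified by the at-most quadratic growth of the log-density under a Gaussian measure). A non-identically-zero real-analytic function has only isolated zeros, hence finitely many on any compact set.

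The main subtlety I anticipate is the clean control of $\bbE_N[\log f_{Y_2^\star}(y+N)]$. The seemingly natural pointwise lower bound $f_{Y_2^\star}(z)\ge\phi_{\sigma_2}(|z|+\sfA)$, followed by a triangle-inequality expansion of $(|y+N|+\sfA)^2$, introduces extraneous cross-terms proportional to $\sfA\,\bbE|N|=\sfA\sqrt{2(\sigma_2^2-\sigma_1^2)/\pi}$ that are absent from the stated $d_2$. Routing the argument through Jensen on the mixture decouples the $y$-dependence from the $N$-averaging via the identity $\bbE_N\bbE_{X^\star}[(y+N-X^\star)^2]=\bbE_{X^\star}[(y-X^\star)^2]+\sigma_2^2-\sigma_1^2$, and produces exactly the additive correction $(\sigma_2^2-\sigma_1^2)/(2\sigma_2^2)$ appearing in $d_2$.
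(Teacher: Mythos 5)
Your proposal is correct and follows essentially the same route as the paper: the same restriction-to-power-constraint argument for $C_s\le\bar C_s$, and the same Jensen-plus-Gaussian-tail estimates (with $\bbE[(y+N-X^\star)^2]=\bbE[(y-X^\star)^2]+\sigma_2^2-\sigma_1^2$) yielding the identical quadratic lower bound on $g(y)+\log(\sigma_2/\sigma_1)$. The only difference is cosmetic: the paper solves the resulting quadratic inequality exactly and loosens the root via $\sqrt{a+b}\le\sqrt{a}+\sqrt{b}$, whereas you factor the difference of squares as $\tfrac12\left(\tfrac{1}{\sigma_1^2}-\tfrac{1}{\sigma_2^2}\right)(|y|-\sfA d_1)(|y|-\sfA/d_1)$ and bound the second factor by the first—both give exactly $R\le\sfA d_1+d_2$.
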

\begin{proof}
First,  note that
\begin{align}
C_s&= \max_{P_X:\:  |X| \le \sfA }   I(X; Y_1 | Y_2)\\
&= \max_{P_X:\:  |X| \le \sfA,\, \bbE[X^2] \le \sfA^2 }   I(X; Y_1 | Y_2)\\
&\le  \max_{P_X:\:   \bbE[X^2] \le \sfA^2 }   I(X; Y_1 | Y_2).
\end{align}
The last expression is the secrecy-capacity of a Gaussian wiretap channel with an average power constraint, which is given in \eqref{eq:Expression_Cs_power_G_Wire}.   

Second, 	for $|y|\ge \sfA$, we can lower-bound the function $g$ as follows:
	\begin{align}
	& g(y) = \bbE\left[\log f_{Y_2^\star}(y+N) \right] - \log f_{Y_1^\star}(y) \label{eq:function_g} \\
	&= \bbE\left[\log \bbE[\phi_{\sigma_2}(y+N-X^\star) | N]  \right] - \log \bbE[\phi_{\sigma_1}(y-X^\star)] \\
	&\ge \bbE\left[\log \phi_{\sigma_2}(y+N-X^\star)   \right] - \log \bbE[\phi_{\sigma_1}(y-X^\star)] \label{eq:Jensen1} \\
	&\ge \log\frac{\sigma_1}{\sigma_2}- \bbE\left[\frac{(y+N-X^\star)^2}{2\sigma_2^2}   \right] +  \frac{(|y|-\sfA)^2}{2\sigma_1^2} \label{eq:monotonicity} \\
	&=\log\frac{\sigma_1}{\sigma_2}- \bbE\left[\frac{(y-X^\star)^2}{2\sigma_2^2}   \right]-\frac{\sigma_2^2-\sigma_1^2}{2\sigma_2^2} +  \frac{(|y|-\sfA)^2}{2\sigma_1^2} \\
	&\ge \log\frac{\sigma_1}{\sigma_2}- \frac{(|y|+\sfA)^2}{2\sigma_2^2}   -\frac{\sigma_2^2-\sigma_1^2}{2\sigma_2^2} +  \frac{(|y|-\sfA)^2}{2\sigma_1^2}, \label{eq:maxdistance}
	\end{align}
	where~\eqref{eq:Jensen1} follows by applying Jensen's inequality to the first term; \eqref{eq:monotonicity} follows by
	\begin{equation}
	\bbE[\phi_{\sigma_1}(y-X^\star)] \le \phi_{\sigma_1}(|y|-\sfA), \qquad |y|\ge \sfA;
	\end{equation}
	and~\eqref{eq:maxdistance} follows by $(y-X^\star)^2 \le (|y|+\sfA)^2$ for all $|y|\ge \sfA\ge |X^\star|$. The function 
	\begin{align}
	&g(y)+\log\left(\frac{\sigma_2}{\sigma_1}\right)-C_s \notag\\
	&\ge  - \frac{(|y|+\sfA)^2}{2\sigma_2^2}   -\frac{\sigma_2^2-\sigma_1^2}{2\sigma_2^2} +  \frac{(|y|-\sfA)^2}{2\sigma_1^2}-C_s
	\end{align}
	is strictly positive when
	\begin{equation}
	|y|>  \frac{\sfA\left(\frac{1}{\sigma_1^2}+\frac{1}{\sigma_2^2}\right)+\sqrt{\frac{4\sfA^2}{\sigma_1^2 \sigma_2^2}+\left(\frac{1}{\sigma_1^2}-\frac{1}{\sigma_2^2}\right)\left(\frac{\sigma_2^2-\sigma_1^2}{\sigma_2^2}+2C_s\right)}}{\frac{1}{\sigma_1^2}-\frac{1}{\sigma_2^2}}.
	\end{equation}
By using the bound $\sqrt{a +b} \le \sqrt{a} +\sqrt{b}$, we arrive at 
\begin{align}
|y| &\ge  \sfA \frac{\sigma_2+\sigma_1}{ \sigma_2-\sigma_1} +\sqrt{ \frac{ \frac{\sigma_2^2-\sigma_1^2}{\sigma_2^2}+2C_s}{ \frac{1}{\sigma_1^2}-\frac{1}{\sigma_2^2} } }.
\end{align} 
This concludes the proof for the bound on $R$. 
\end{proof}

To show the bound on the number of points, we need to first present a number of ancillary results. 
We start with the following definition. 
\begin{definition}[Sign Changes of a Function]  The number of sign changes of a function $\xi: \Omega \to \mathbb{R}$ is given by 
\begin{equation}
  \scrS(\xi) = \sup_{m\in \bbN } \left\{\sup_{y_1< \cdots< y_m  \subseteq \Omega} \scrN \{ \xi (y_i) \}_{i=1}^m\right\} \text{,}
\end{equation}
where  $\scrN\{ \xi (y_i) \}_{i=1}^m$ is the number of changes of sign of the sequence $\{ \xi (y_i) \}_{i=1}^m $.
\end{definition}

The following theorem, shown in \cite{karlin1957polya}, will be a key step in the proof of the upper bound on the number of mass points.

\begin{theorem}[Oscillation Theorem]\label{thm:OscillationThoerem} Given domains $\bbI_1 $ and $\bbI_2$, let $p\colon \bbI_1\times \bbI_2  \to \bbR$ be a strictly totally positive
kernel.\footnote{A function $f:\bbI_1 \times \bbI_2 \to \bbR$ is said to be a totally positive kernel of order $n$ if $\det\left([f(x_i,y_j)]_{i,j = 1}^{m}\right) >0 $ for all $1\le m \le n $, and for all $x_1< \cdots < x_m \in \bbI_1  $, and $y_1< \cdots < y_m \in \bbI_2$. If $f$ is  totally positive kernel of order $n$ for all $n\in \bbN$, then $f$ is a strictly totally positive kernel.} For an arbitrary $y$, suppose $p(\cdot, y)\colon \bbI_1 \to \bbR $ is an $n$-times differentiable function. Assume that $\mu$ is a measure on $\bbI_2 $, and let $\xi \colon \bbI_2 \to \bbR $ be a function with $\scrS(\xi) = n$. For $x\in \bbI_1$, define
\begin{equation}
\Xi(x)=  \int  \xi (y) p(x ,y) {\rm d} \mu(y) \text{.} \label{eq:Integral_Transform}
\end{equation}
If $\Xi \colon \bbI_1 \to \bbR$ is an $n$-times differentiable function, then either $\rmN(\bbI_1, \Xi) \le n$, or $\Xi\equiv 0$.  
\end{theorem}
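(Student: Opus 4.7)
The plan is to prove the Oscillation Theorem by a duality argument that exploits the Chebyshev-system structure that strict total positivity confers on the family $\{p(x,\cdot):x\in\bbI_1\}$. First I would establish the following key property: for any $x_0<\cdots<x_n$ in $\bbI_1$, the functions $\{p(x_i,\cdot)\}_{i=0}^{n}$ form a Chebyshev system on $\bbI_2$, meaning every nontrivial linear combination $\psi(y)=\sum_{i=0}^{n} c_i p(x_i,y)$ has at most $n$ zeros in $\bbI_2$. This is immediate from the STP hypothesis: if $\psi$ had $n+1$ zeros $y_0<\cdots<y_n$, the homogeneous system $\psi(y_j)=0$ would force $(c_0,\ldots,c_n)=0$ because the coefficient matrix $[p(x_i,y_j)]_{i,j=0}^{n}$ has strictly positive determinant.

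Next I would argue by contradiction. Assume $\Xi$ has $n+1$ distinct zeros $x_0<\cdots<x_n$ in $\bbI_1$. The identities $\Xi(x_i)=\int p(x_i,y)\xi(y)\,\rmd\mu(y)=0$ express that $\xi$ is orthogonal in the $L^2(\mu)$ pairing to every element of the $(n+1)$-dimensional space $V:=\mathrm{span}\{p(x_i,\cdot)\}_{i=0}^{n}$. Since $\scrS(\xi)=n$, I pick witness points $y_1<\cdots<y_n$ at which $\xi$ changes sign. Using the Chebyshev-system property of $V$, standard Haar-space interpolation produces a nontrivial $\psi\in V$ vanishing exactly at $y_1,\ldots,y_n$, whose sign alternates on the $n+1$ resulting subintervals; the overall sign of $\psi$ is the free parameter used to match the sign pattern of $\xi$. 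Then $\psi(y)\xi(y)\ge 0$ pointwise on $\supp(\mu)$, while $\int \psi\,\xi\,\rmd\mu=0$ by orthogonality, so $\psi\xi=0$ $\mu$-a.e. Since $\psi$ vanishes on only $n$ points, $\xi=0$ $\mu$-a.e.\ on $\supp(\mu)$, and hence $\Xi\equiv 0$, which is the second alternative of the theorem.

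To upgrade from distinct zeros to zeros counted with multiplicity (which is where the $n$-times differentiability of $\Xi$ actually enters), I would run a Rolle-type reduction: if $\Xi$ has $N\ge n+1$ zeros counted with multiplicity, iterated Rolle's theorem gives $\Xi^{(k)}$ at least $N-k$ zeros, and writing $\Xi^{(k)}(x)=\int \partial_x^k p(x,y)\,\xi(y)\,\rmd\mu(y)$, I would reapply the Chebyshev argument to the kernel $\partial_x^k p$, using that derivatives of an STP kernel inherit the requisite determinantal positivity (or else appeal to a limiting argument coalescing tightly clustered $x_i$'s and using continuity of the whole construction). The main obstacle is the construction of the sign-matching interpolant $\psi$: standard Chebyshev theory gives existence of a nontrivial element of $V$ vanishing at $n$ prescribed interior points of $\bbI_2$, but verifying that the resulting sign pattern between consecutive zeros matches (up to the one global sign choice) the sign pattern of $\xi$ reduces again to the STP determinantal inequalities and must be tracked carefully. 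A secondary technicality is the measure-theoretic passage from $\psi\xi=0$ $\mu$-a.e.\ to $\Xi\equiv 0$, which requires ruling out the pathological case in which $\supp(\mu)$ is concentrated at the finitely many zeros of $\psi$; this is handled by strict total positivity, which guarantees that $\{p(x_i,\cdot)\}_{i=0}^{n}$ are linearly independent in $L^2(\mu)$.
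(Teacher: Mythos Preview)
The paper does not prove this theorem; it is quoted from \cite{karlin1957polya} and invoked as a black box at step~\eqref{eq:applying_Oscillationthm}. There is therefore no in-paper argument to compare your proposal against.

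Your proposal is, however, a correct reconstruction of the classical duality argument behind Karlin's variation-diminishing theory. Two remarks. First, the ``main obstacle'' you flag dissolves once the interpolant is written explicitly: take
\[
\psi(y)=\det\bigl[p(x_i,z_j)\bigr]_{i,j=0}^{n},\qquad (z_0,\ldots,z_n)=(y_1,\ldots,y_n,y).
\]
Expansion along the last column shows $\psi\in V$; repeated columns give $\psi(y_k)=0$; and for a sample $y$ in any open interval between consecutive $y_k$'s, permuting the $z$-columns into increasing order yields a strictly positive STP minor times a sign that alternates with the interval index, which is precisely the sign pattern you need to match $\xi$. Second, the Rolle reduction for zeros with multiplicity is unnecessary here: as the paper uses it, $\rmN(\cdot,\cdot)$ counts distinct zeros, so your distinct-zero argument already delivers the dichotomy and in fact never invokes the $n$-fold differentiability hypothesis on $\Xi$ or on $p(\cdot,y)$. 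The residual measure-theoretic worry (when $\supp(\mu)$ is concentrated at the $y_k$'s) is also handled by STP in the other variable: in that case $\Xi$ is a nontrivial combination of at most $n$ of the functions $p(\cdot,y_k)$, which by the Chebyshev property in the $x$-variable has at most $n-1$ zeros, contradicting the assumed $n+1$ zeros unless $\Xi\equiv 0$.
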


The above theorem says that the number of zeros of a function $\Xi(x)$, which is the output of integral transformation, is less than the number of sign changes of the function $  \xi (y) $, which is the input to the integral transformation.    The fact that the Gaussian pdf is a strictly totally positive kernel was show in \cite{karlin1957polya}.

We are now in the position to show the upper bound in \eqref{eq:Implicit_Upper_BOund}: 
\begin{align}
&|\supp(P_{X^\star})|  \notag\\
&\le \rmN\left([-\sfA,\sfA],   \Xi(x) - C_s(\sigma_1, \sigma_2, \sfA) \right) \label{eq:Zeros_Inclusion_Bound} \\
&= \rmN\left([-\sfA,\sfA],\bbE \left[ g(Y_1) +\log\left(\frac{\sigma_2}{\sigma_1}\right)-C_s \Big| X=x \right] \right)  \label{eq:Using_Def_of_Xi} \\
&\le \scrS\left( g(\cdot)+\log\left(\frac{\sigma_2}{\sigma_1}\right)-C_s \right) \label{eq:applying_Oscillationthm}  \\
&\le \rmN\left(\mathbb{R}, g(\cdot)+\log\left(\frac{\sigma_2}{\sigma_1}\right)-C_s\right) \\
&= \rmN\left([-R,R], g(\cdot)+\log\left(\frac{\sigma_2}{\sigma_1}\right)-C_s\right) \label{eq:Lemmaboundedsupport} \\
&< \infty, \label{eq:Follows_by_analyticity}
\end{align}
where \eqref{eq:Zeros_Inclusion_Bound} follows by using the following inclusion which is a consequence of Lemma~\ref{lem:KKT}
\begin{equation}
 \supp(P_{X^\star}) 
 \subseteq  \left\{x \in [-\sfA,\sfA] :   \Xi(x) - C_s=0 \right \} ;  \label{eq:InclusiongInequality}
\end{equation} 
\eqref{eq:Using_Def_of_Xi} follows by using \eqref{eq:Writing_KKT_as_statistics};   \eqref{eq:applying_Oscillationthm} follows by applying Theorem~\ref{thm:OscillationThoerem} and where the fact that Gaussian pdf is   a strictly totally positive kernel;  \eqref{eq:Lemmaboundedsupport} is proved in Lemma~\ref{Lem:boundedsupport};  and \eqref{eq:Follows_by_analyticity}   follows since $g(\cdot)$ is an analytic function in $(-R,R)$.

\subsection{Counting the number of zeros: Proof of the bound in \eqref{eq:Upper_Bound_Explicit} }

The key to finding an explicit upper bound will be the following  complex-analytic result. 
\begin{lemma}[Tijdeman's Number of Zeros Lemma \cite{Tijdeman1971number}]\label{lem:number of zeros of analytic function}
 Let $R, s, t$ be positive numbers such that $s>1$. For the complex valued function $f\neq  0$ which is analytic on $|z|<(st+s+t)R$, its number of zeros $  \rmN(\cD_R,f)$ within the disk $\cD_R = \{z\colon |z|\le R\} $ satisfies
\begin{align}
 & \rmN(\cD_R,f) \notag\\
  & \le \frac{1}{\log s} \left(\log \max_{|z|\le (st+s+t)R } |f(z)|   -\log \max_{|z|\le tR} |f(z)|\right) \text{.} \label{eq:Tijdeman}
\end{align}
\end{lemma}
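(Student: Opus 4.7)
My plan is to prove Tijdeman's bound by a \emph{shift-and-Jensen} argument. The core difficulty is that a direct application of Jensen's formula to $f$ at the origin would feature $\log|f(0)|$, for which no lower bound is available from the hypotheses. I bypass this by recentering at a point $z_0$ on the boundary $\{|z|=tR\}$ where $|f|$ attains its maximum over $\{|z|\le tR\}$: after translating, $|F(0)|$ becomes exactly the quantity $\max_{|z|\le tR}|f(z)|$ that appears on the right-hand side of the claimed inequality, and Jensen then yields the bound.

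Concretely, let $R_0:=(st+s+t)R$, set $n=\rmN(\cD_R,f)$ with zeros $a_1,\ldots,a_n\in\cD_R$ counted with multiplicity, and write $M(\rho):=\max_{|z|\le\rho}|f(z)|$. Since $f\not\equiv 0$ its zeros are discrete, so $M(tR)>0$; by the maximum modulus principle pick $z_0$ with $|z_0|=tR$ and $|f(z_0)|=M(tR)$, and define $F(w):=f(z_0+w)$. Two facts follow immediately: first, $F$ is analytic on $\{|w|<R_0-tR\}=\{|w|<s(t+1)R\}$ with $|F(0)|=M(tR)$; second, each zero $a_k$ of $f$ in $\cD_R$ yields a zero $b_k=a_k-z_0$ of $F$ with $|b_k|\le R+tR=(t+1)R$, so $F$ has at least $n$ zeros in $\{|w|\le(t+1)R\}$, a disk that (because $s>1$) is compactly contained in the analyticity domain of $F$.

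For any $r\in\bigl((t+1)R,\,s(t+1)R\bigr)$, Jensen's formula applied to $F$ on $\{|w|\le r\}$ reads
\[
\log|F(0)|+\sum_{\substack{F(b)=0\\|b|<r}}\log\frac{r}{|b|}=\frac{1}{2\pi}\int_0^{2\pi}\log|F(re^{i\theta})|\,d\theta.
\]
All summands on the left are nonnegative, so keeping only the $n$ contributions from $b_1,\ldots,b_n$ and using $|b_k|\le(t+1)R$ gives
\[
\log M(tR)+n\log\frac{r}{(t+1)R}\le\log\max_{|w|\le r}|F(w)|\le\log\max_{|z|\le tR+r}|f(z)|,
\]
where the last step uses $|z_0+w|\le|z_0|+|w|\le tR+r$. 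Letting $r\uparrow s(t+1)R$ sends $r/((t+1)R)\to s$ and $tR+r\to R_0=(st+s+t)R$, producing $\log M(tR)+n\log s\le\log M(R_0)$, which is exactly Tijdeman's inequality after dividing by $\log s$.

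The only delicate point is the limit $r\uparrow s(t+1)R$, since $f$ is only assumed analytic on the open disk of radius $R_0$ while the bound is stated using a maximum on the closed disk. I will carry out the argument for each $r<s(t+1)R$ separately and then take $\sup_r$, using monotonicity of $\max_{|z|\le\rho}|f(z)|$ in $\rho$ over nested closed subdisks of the open domain to pass to the limit. Every other step is a routine arithmetic verification that the geometry $tR+s(t+1)R=(st+s+t)R$ correctly aligns the inner disk $\{|z|\le tR\}$, the zero disk $\{|z|\le R\}$, and the outer disk $\{|z|\le R_0\}$ into the Jensen configuration after the translation by $z_0$.
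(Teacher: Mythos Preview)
The paper does not prove this lemma; it is simply quoted from Tijdeman's original work \cite{Tijdeman1971number} and used as a black box. Your proof is correct and is in fact essentially Tijdeman's own argument: recentering at a point $z_0$ on $\{|z|=tR\}$ where $|f|$ is maximal, so that Jensen's formula can be invoked with $|F(0)|=\max_{|z|\le tR}|f(z)|$ already matching the right-hand side, and then observing that the translated zeros land in $\{|w|\le (t+1)R\}$ while $F$ stays analytic on $\{|w|<s(t+1)R\}$. The three geometric identities you rely on, namely $R_0-tR=s(t+1)R$, $R+tR=(t+1)R$, and $tR+s(t+1)R=R_0$, are exactly what make the configuration work. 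Two minor remarks: you implicitly need $F(0)\neq 0$ and no zeros of $F$ on the circle $|w|=r$; the first follows from $M(tR)>0$ as you noted, and the second is handled by choosing the sequence $r\uparrow s(t+1)R$ to avoid the (isolated) zeros. Your caveat about the closed-versus-open disk at radius $R_0$ is well taken; the inequality you obtain is $n\log s\le \log\sup_{\rho<R_0}M(\rho)-\log M(tR)$, which is at least as strong as the stated form under any reasonable reading of $\max_{|z|\le R_0}|f(z)|$.
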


Furthermore, the following loosened version of the bound in \eqref{eq:Implicit_Upper_BOund} will be useful. 

\begin{lemma}
\begin{align}
|\supp(P_{X^\star})| 
&\le \rmN\left([-R,R], h (\cdot) \right) +1 \label{eq:productfy1}
\end{align}
where 
\begin{align}
&\frac{h(y)}{ \sigma_1^2 f_{Y_1}(y)}  \notag\\
&=  \frac{ \bbE_N \left[  \bbE[X^\star| Y_2=y+N] \right] -y}{\sigma_2^2}-  \frac{\bbE[X^\star| Y_1=y] -y}{\sigma_1^2}  \label{eq:First_representation_h}\\
&= \frac{\bbE\left[N\log f_{Y_2}(y+N) \right]}{\sigma^2_2-\sigma^2_1} -  \frac{\bbE[X^\star| Y_1=y] -y}{\sigma_1^2},
\end{align}
and where $N\sim {\cal N}(0,\sigma_2^2-\sigma_1^2)$. 
\end{lemma}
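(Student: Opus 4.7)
The plan is to recognize that $h$ is just a strictly positive multiple of the derivative of $G(y) := g(y) + \kappa_1$, and then to apply Rolle's theorem to reduce the bound in \eqref{eq:Implicit_Upper_BOund} to a bound on the zeros of $h$ (the ``$+1$'' in \eqref{eq:productfy1} being Rolle's constant).

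First, note that between any two distinct zeros of $G$ on $[-R,R]$ there must lie a zero of $G'$; hence $\rmN([-R,R], G) \le \rmN([-R,R], G') + 1$. Combining with \eqref{eq:Implicit_Upper_BOund} already yields
\[
|\supp(P_{X^\star})| \le \rmN([-R,R], G') + 1.
\]
It remains to compute $G'$ and match it with $h$. Differentiating under the integral in $g(y) = \bbE[\log f_{Y_2^\star}(y+N)] - \log f_{Y_1^\star}(y)$ (legitimate by dominated convergence, since $f_{Y_1^\star}$ and $f_{Y_2^\star}$ are smooth, strictly positive Gaussian mixtures) and applying the standard score identity
\[
\frac{f_Y'(y)}{f_Y(y)} = \frac{\bbE[X\mid Y=y] - y}{\sigma^2}, \qquad Y = X + Z,\ Z\sim\cN(0,\sigma^2),
\]
with $\bbE[N]=0$ gives
\[
G'(y) = \frac{\bbE_N\!\left[\bbE[X^\star\mid Y_2=y+N]\right] - y}{\sigma_2^2} \;-\; \frac{\bbE[X^\star\mid Y_1=y] - y}{\sigma_1^2}.
\]
Multiplying through by the strictly positive factor $\sigma_1^2 f_{Y_1^\star}(y)$ reproduces exactly the first displayed expression for $h(y)$. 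Positivity of the multiplier shows that the zeros of $h$ on $[-R,R]$ coincide with those of $G'$, completing \eqref{eq:productfy1}.

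For the second representation, apply Gaussian integration by parts (Stein's lemma) to $N \sim \cN(0, \sigma_2^2 - \sigma_1^2)$ with $\psi(n) := \log f_{Y_2^\star}(y+n)$:
\[
\bbE[N\,\log f_{Y_2^\star}(y+N)] = (\sigma_2^2 - \sigma_1^2)\,\bbE\!\left[\frac{f_{Y_2^\star}'(y+N)}{f_{Y_2^\star}(y+N)}\right].
\]
Dividing by $\sigma_2^2 - \sigma_1^2$ and using the same score identity recovers the first bracketed term of $h(y)/(\sigma_1^2 f_{Y_1^\star}(y))$, yielding the alternative expression.

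The only technical subtleties are the differentiation under the expectation and the vanishing boundary term required by Stein's lemma; both are routine in view of the compact support of $X^\star$ and the Gaussian tails of $f_{Y_1^\star}, f_{Y_2^\star}$, so I do not expect them to be a genuine obstacle. The conceptual content is entirely in the Rolle-theorem step, which both accounts for the constant ``$+1$'' and motivates writing $h$ as the (density-weighted) derivative of $G$ — a form that will be far more amenable to the subsequent Tijdeman-based complex-analytic bound needed for \eqref{eq:Upper_Bound_Explicit}.
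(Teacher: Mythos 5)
Your proposal matches the paper's proof essentially step for step: the paper likewise starts from \eqref{eq:Implicit_Upper_BOund}, applies Rolle's theorem to get the ``$+1$'', multiplies $g'$ by the strictly positive factor $\sigma_1^2 f_{Y_1}$ to define $h$, computes $g'$ via Tweedie's formula (your ``score identity''), and obtains the second representation by integrating by parts against the Gaussian density of $N$ (your Stein's-lemma step is the identical computation). The argument is correct as written.
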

\begin{IEEEproof}
Starting from~\eqref{eq:Lemmaboundedsupport}, we can write
\begin{align}
|\supp(P_{X^\star})| &\le \rmN\left([-R,R], g(\cdot)+\log\left(\frac{\sigma_2}{\sigma_1}\right)-C_s\right) \\
&\le \rmN\left([-R,R], g'(\cdot)\right) +1 \label{eq:Rolle}\\
&=\rmN\left([-R,R], \sigma_1^2 f_{Y_1}(\cdot) g'(\cdot)\right) +1 \label{eq:productfy1}
\end{align}
where in step \eqref{eq:Rolle} we have applied Rolle's theorem, and in step \eqref{eq:productfy1} we used the fact that multiplying by a strictly positive function (i.e., $\sigma_1^2 f_{Y_1}$) does not change the number of zeros. The first derivative of $g$ can be computed as follows:
\begin{align}
g'(y) &= \bbE\left[\frac{\rm d}{{\rm d}y}\log f_{Y_2}(y+N) \right] -\frac{ \rm d }{ {\rm d}y} \log f_{Y_1}(y) \label{eq:der1}\\
&=  \frac{ \bbE_N \left[  \bbE[X^\star| Y_2=y+N] \right] -y}{\sigma_2^2}-  \frac{\bbE[X^\star| Y_1=y] -y}{\sigma_1^2},
\end{align}
where in the last step we have used the well-known Tweedy's formula (see for example \cite{esposito1968relation,DytsoISIT2020extended}):
\begin{equation}
\bbE[X^\star| Y_i=y] = y  +\sigma^2_i\frac{ \rm d }{ {\rm d}y}\log f_{Y_i}(y).  
\end{equation}  
An alternative expression for the first term in the RHS of \eqref{eq:der1} is as follows:
\begin{align}
&\bbE\left[\frac{\rm d}{{\rm d}y}\log f_{Y_2}(y+N) \right] \notag\\
&= \int_{-\infty}^{\infty} f_N(n)\frac{ \rm d }{ {\rm d}y}\log f_{Y_2}(y+n) {\rm d}n \\
&=-\int_{-\infty}^{\infty} \left(\frac{ \rm d }{ {\rm d}n}f_N(n)\right)\cdot \log f_{Y_2}(y+n) {\rm d}n \\
&=\int_{-\infty}^{\infty} \frac{n }{\sigma^2_2-\sigma^2_1}f_N(n)\cdot \log f_{Y_2}(y+n) {\rm d}n \\
&=\frac{1}{\sigma^2_2-\sigma^2_1} \bbE\left[N\log f_{Y_2}(y+N) \right],
\end{align}
The proof is concluded by letting 
\begin{align} \label{eq:productfy2}
h(y)\triangleq \sigma_1^2 f_{Y_1}(y)g'(y).
\end{align}
\end{IEEEproof}

With the goal of getting an explicit bound on the number of zeros, through the application of Tijdeman's number of zeros Lemma, the following lemmas propose upper and lower bound to the maximum module of the complex analytic extension of $h$ over the disk ${\cal D}_R = \{z: |z|\le R \}$. 

\begin{lemma} \label{lem:moduls_upper_bound}
	Let $\breve{h}:\mathbb{C} \rightarrow\mathbb{C}$ denote the complex extension of the function $h$ in~\eqref{eq:productfy2}. Then, for $\sfB \ge \sfA$, we have that 
	\begin{align}
	&\max_{|z|\le \sfB} |\breve{h}(z)|  \le \frac{1}{\sqrt{2\pi\sigma_1^2}}  \rme^{ \frac{\sfB^2}{2\sigma_1^2} } \left( a_1 \sfB^2 +a_2 \sfB+a_3 \right)
	\end{align}
	where 
	\begin{align}
	a_1&=  \frac{ 3 \sigma_1^2}{ \sigma_2^2 \sqrt{\sigma_2^2-\sigma_1^2}}, \\
	a_2&=  \frac{ \sqrt{2} \sigma_1^2}{  \sqrt{ \sigma_2^2} \sqrt{\sigma_2^2-\sigma_1^2}} +2,\\
	a_3&=\frac{\sigma_1^2}{\sqrt{\sigma_2^2-\sigma_1^2}}  \left( \sqrt{  |\log(2\pi\sigma^2_2)|^2 +  \frac{ 24  (\sigma_2^2-\sigma_1^2)^2 }{\sigma^4_2}  + \pi^2}   \right).
	\end{align}
\end{lemma}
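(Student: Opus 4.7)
The plan is to decompose $\breve h(z) = T_1(z) + T_2(z)$ following the two-term representation of $h$ derived in the previous lemma, namely
\begin{align*}
T_1(z) &= \frac{\sigma_1^2}{\sigma_2^2 - \sigma_1^2}\,\breve f_{Y_1}(z)\, \breve\Psi(z), \\
T_2(z) &= -\bbE\bigl[(X^\star - z)\,\phi_{\sigma_1}(z-X^\star)\bigr],
\end{align*}
where $\breve\Psi(z)$ denotes the analytic extension of $y\mapsto\bbE[N\log f_{Y_2}(y+N)]$. The basic pointwise estimate used throughout is that, for $z=u+iv$ with $|z|\le \sfB$ and $x\in\bbR$, $\Re[-(z-x)^2/(2\sigma^2)] = (v^2-(u-x)^2)/(2\sigma^2)\le v^2/(2\sigma^2)$ and hence $|\phi_{\sigma}(z-x)|\le\frac{1}{\sqrt{2\pi\sigma^2}}\exp(\sfB^2/(2\sigma^2))$.

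The second summand is easy: combining this estimate (with $\sigma=\sigma_1$) and $|X^\star-z|\le\sfA+\sfB\le 2\sfB$ gives $|T_2(z)|\le 2\sfB\cdot\frac{1}{\sqrt{2\pi\sigma_1^2}}e^{\sfB^2/(2\sigma_1^2)}$, which is the $+2$ contribution to $a_2$. The same estimate yields $|\breve f_{Y_1}(z)|\le\frac{1}{\sqrt{2\pi\sigma_1^2}}e^{\sfB^2/(2\sigma_1^2)}$, the common prefactor in the lemma. For the first summand it remains to bound $|\breve\Psi(z)|$ by a polynomial in $\sfB$ (with no additional exponential growth), which I would do via Cauchy--Schwarz against $N\sim\cN(0,\sigma_2^2-\sigma_1^2)$,
\[
 |\breve\Psi(z)| \le \sqrt{\sigma_2^2-\sigma_1^2}\,\sqrt{\bbE\bigl[|\log\breve f_{Y_2}(z+N)|^2\bigr]}.
\]
Combined with the prefactor $\sigma_1^2/(\sigma_2^2-\sigma_1^2)$ already in $T_1$, this produces the prefactor $\sigma_1^2/\sqrt{\sigma_2^2-\sigma_1^2}$ common to $a_1,a_2,a_3$.

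To finish, I take the principal branch of $\log$, so that $|\log w|^2=(\log|w|)^2+(\arg w)^2\le(\log|w|)^2+\pi^2$; the $\pi^2$ contribution of $a_3$ falls out immediately, leaving $\bbE[(\log|\breve f_{Y_2}(z+N)|)^2]$ to be estimated. I would obtain this from two-sided bounds on $|\breve f_{Y_2}(z+n)|$ for $|z|\le\sfB$ and $n\in\bbR$: the upper bound $|\breve f_{Y_2}(z+n)|\le e^{v^2/(2\sigma_2^2)}f_{Y_2}(u+n)\le\frac{1}{\sqrt{2\pi\sigma_2^2}}e^{\sfB^2/(2\sigma_2^2)}$ follows by moving the modulus inside the expectation and using $\bbE[e^{-(u+n-X^\star)^2/(2\sigma_2^2)}]=\sqrt{2\pi\sigma_2^2}\,f_{Y_2}(u+n)$, while a matching lower bound yields $|\log|\breve f_{Y_2}(z+n)||\le\tfrac12|\log(2\pi\sigma_2^2)|+c(|n|+\sfB+\sfA)^2$ for an explicit $c$. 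Squaring, expanding, and integrating against the Gaussian density of $N$ produces exactly the four terms $9\sfB^4/\sigma_2^4$, $2\sfB^2/\sigma_2^2$, $|\log(2\pi\sigma_2^2)|^2$, and $24(\sigma_2^2-\sigma_1^2)^2/\sigma_2^4$ (the factor $24$ coming from the Gaussian moment $\bbE[N^4]=3(\sigma_2^2-\sigma_1^2)^2$ together with combinatorial factors from expanding the square), and the subadditivity $\sqrt{a+b+c+d}\le\sqrt a+\sqrt b+\sqrt c+\sqrt d$ then distributes the square root into the three summands $a_1\sfB^2$, the remainder of $a_2\sfB$, and $a_3$.

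The main technical obstacle is establishing the \emph{lower} bound on $|\breve f_{Y_2}(z+n)|$ for complex $z$ used above. For real $z$ one has immediately $f_{Y_2}(w)\ge\frac{1}{\sqrt{2\pi\sigma_2^2}}e^{-(|w|+\sfA)^2/(2\sigma_2^2)}$, since a nonnegative expectation exceeds its pointwise minimum; but when $\Im z\ne 0$ the integrand $\phi_{\sigma_2}(z+n-X^\star)$ is complex-valued and oscillatory, so one must carefully rule out significant cancellation in $\bbE[\phi_{\sigma_2}(z+n-X^\star)]$ uniformly over $|z|\le\sfB$ and $|X^\star|\le\sfA$. Forging the precise constants appearing in $a_3$ presumably requires a direct complex-analytic estimate at this step.
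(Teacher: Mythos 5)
The version of the paper you were given actually states Lemma~\ref{lem:moduls_upper_bound} \emph{without} proof (it is deferred to the extended version), so there is no line-by-line comparison to make; judged on its own, your skeleton is visibly the bookkeeping that generates the stated constants: the split of $\breve h$ into $\frac{\sigma_1^2}{\sigma_2^2-\sigma_1^2}\breve f_{Y_1}\breve\Psi$ plus $-\bbE[(X^\star-z)\phi_{\sigma_1}(z-X^\star)]$, the estimate $|\phi_{\sigma}(z-x)|\le (2\pi\sigma^2)^{-1/2}\rme^{\sfB^2/(2\sigma^2)}$ (whence the common prefactor and the $+2$ in $a_2$), Cauchy--Schwarz against $N$ (whence the factor $\sigma_1^2/\sqrt{\sigma_2^2-\sigma_1^2}$), the principal branch of the logarithm (whence $\pi^2$), and $\bbE[N^4]=3(\sigma_2^2-\sigma_1^2)^2$ (whence the $24$). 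Up to that point the proposal is sound.

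The step you defer at the end, however, is not a constant-chasing detail: as you have set it up, it is unfixable, because the quantity you need to bound away from zero actually vanishes. The function $\breve f_{Y_2}$ is the entire extension of a finite Gaussian mixture, and such extensions generically have zeros in the disk of interest. Concretely, for a symmetric two-point input at $\pm a$ (and nothing in your argument restricts $P_{X^\star}$ beyond $|X^\star|\le\sfA$; indeed binary inputs are exactly what one expects at small $\sfA$), one has $\breve f_{Y_2}(z)=\frac{1}{\sqrt{2\pi\sigma_2^2}}\,\rme^{-(z^2+a^2)/(2\sigma_2^2)}\cosh\left(az/\sigma_2^2\right)$, which vanishes at $z=\rmi\,\pi\sigma_2^2(2k+1)/(2a)$; the smallest such zero has modulus $\pi\sigma_2^2/(2a)$ and lies inside $|z|\le\sfB$ in the regime where the lemma is invoked, namely $\sfB=(2\rme+1)R$ with $R\ge \sfA(\sigma_2+\sigma_1)/(\sigma_2-\sigma_1)$. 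Consequently no uniform lower bound on $|\breve f_{Y_2}(z+n)|$ exists, the map $z\mapsto\bbE[N\log\breve f_{Y_2}(z+N)]$ is not even well defined (no consistent branch) on the whole disk, and it need not coincide with the analytic continuation of $\Psi$ where it is defined — so both the analyticity needed later for Tijdeman's lemma and your bound on $\bbE[|\log\breve f_{Y_2}(z+N)|^2]$ collapse at this point. The standard repair is to fix the analytic continuation \emph{before} any logarithm of a complex number appears: change variables $t=y+n$ to write $\Psi(y)=\int(t-y)\,\phi_{\tilde\sigma}(t-y)\log f_{Y_2}(t)\,{\rm d}t$ with $\tilde\sigma^2=\sigma_2^2-\sigma_1^2$, and complexify $y\mapsto z$ only inside the kernel $(t-z)\phi_{\tilde\sigma}(t-z)$. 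Then $\breve\Psi$ is entire, the logarithm is evaluated only at real arguments, where the elementary two-sided bound $\phi_{\sigma_2}(|t|+\sfA)\le f_{Y_2}(t)\le(2\pi\sigma_2^2)^{-1/2}$ holds, and the complex growth is carried by $|\phi_{\tilde\sigma}(t-z)|=\phi_{\tilde\sigma}(t-\Re z)\,\rme^{(\Im z)^2/(2\tilde\sigma^2)}$ — whose extra exponential factor you must then track through to the final bound. Until the continuation and the two-sided control are handled by some such device, the parts of the claim embodied in $a_1$ and $a_3$ remain unproved.
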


\begin{lemma}\label{lem:moduls_lower_bound}
	Let $\breve{h}:\mathbb{C} \rightarrow\mathbb{C}$ denote the complex extension of the function $h$ in~\eqref{eq:productfy2}. Then,
	for 
	\begin{align}
\sfB \ge  \sfA  \frac{\sigma_2^2+\sigma_1^2}{ \sigma_2^2-\sigma_1^2} \label{eq:Condition_on_B}
\end{align} 
we have that 
	\begin{align}
	\max_{|z|\le \sfB} |\breve{h}(z)| \ge  \left( c_1 \sfB   - c_2 \sfA     \right) 
 \frac{ \exp\left(  -\frac{(\sfB+\sfA)^2}{2\sigma_1^2} \hspace{-0.1cm}\right) }{\sqrt{2\pi \sigma_1^2}}	>0,
	\end{align}
	where $c_1=1-\frac{\sigma_1^2 }{\sigma_2^2} $ and $c_2=1 +  \frac{\sigma_1^2 }{\sigma_2^2}$ .
\end{lemma}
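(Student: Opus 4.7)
The plan is to lower bound the maximum modulus on the disk by evaluating $\breve{h}$ on the real axis at the single boundary point $y=\sfB$. Since $\breve{h}$ agrees with $h$ on $\bbR$, we have trivially
\begin{equation*}
\max_{|z|\le \sfB} |\breve{h}(z)| \ge |h(\sfB)|,
\end{equation*}
so it is enough to produce the claimed lower bound on $h(\sfB)$ and to verify that, under~\eqref{eq:Condition_on_B}, this bound is nonnegative.

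To that end I would start from representation~\eqref{eq:First_representation_h}, multiply through by $\sigma_1^2$, and collect the terms that are linear in $y$:
\begin{equation*}
\frac{h(y)}{f_{Y_1}(y)} = \left(1-\frac{\sigma_1^2}{\sigma_2^2}\right) y + \frac{\sigma_1^2}{\sigma_2^2}\,\bbE_N\!\left[\bbE[X^\star \mid Y_2 = y+N]\right] - \bbE[X^\star \mid Y_1 = y].
\end{equation*}
Because $|X^\star|\le \sfA$ almost surely, both conditional expectations, and hence also the outer average over $N$, are bounded in absolute value by $\sfA$. Applying the crude worst-case bounds $\bbE_N[\bbE[X^\star|Y_2=y+N]] \ge -\sfA$ and $-\bbE[X^\star|Y_1=y] \ge -\sfA$ gives
\begin{equation*}
\frac{h(y)}{f_{Y_1}(y)} \ge c_1 y - c_2 \sfA,
\end{equation*}
with $c_1, c_2$ exactly as in the statement. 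Hypothesis~\eqref{eq:Condition_on_B} is equivalent to $c_1 \sfB \ge c_2 \sfA$, so the right-hand side at $y=\sfB$ is the nonnegative number $c_1 \sfB - c_2 \sfA$.

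The remaining ingredient is a pointwise lower bound on $f_{Y_1}(\sfB)= \bbE[\phi_{\sigma_1}(\sfB-X^\star)]$. Since $X^\star \in [-\sfA,\sfA]$ and $\sfB>0$, one has $|\sfB - X^\star| \le \sfB + \sfA$, so the Gaussian density is minimized at $X^\star=-\sfA$ and
\begin{equation*}
f_{Y_1}(\sfB) \ge \phi_{\sigma_1}(\sfB+\sfA) = \frac{1}{\sqrt{2\pi\sigma_1^2}}\exp\!\left(-\frac{(\sfB+\sfA)^2}{2\sigma_1^2}\right).
\end{equation*}
Multiplying the two lower bounds delivers the claim. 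The proof is mostly bookkeeping; the only conceptually meaningful observation is that condition~\eqref{eq:Condition_on_B} is precisely the threshold at which the linear-in-$y$ term $c_1 y$ of the Tweedie-type identity for $h/f_{Y_1}$ overwhelms the $\sfA$-scale residual coming from the conditional means. Unlike the matching upper bound in Lemma~\ref{lem:moduls_upper_bound}, no complex-variable estimate is needed: the inequality $\max_{|z|\le\sfB}|\breve{h}(z)| \ge |h(\sfB)|$ reduces the whole question to a real-axis calculation.
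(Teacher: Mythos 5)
Your proof is correct and takes essentially the same route as the paper's: both reduce the maximum-modulus bound to the real evaluation $|\breve{h}(\sfB)|$, bound the conditional means $\bbE[X^\star \mid Y_i = \cdot]$ by $\pm\sfA$ so that condition~\eqref{eq:Condition_on_B} makes the linear term dominate (giving exactly $c_1\sfB - c_2\sfA$), and lower-bound $f_{Y_1}(\sfB)$ by $\phi_{\sigma_1}(\sfB+\sfA)$. The only cosmetic difference is that you absorb the factor $\sigma_1^2$ into the coefficients at the outset, whereas the paper carries $\sigma_1^2 f_{Y_1}(\sfB)$ through and normalizes at the end.
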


\begin{proof}

First, note that
 \begin{align}
 \frac{ \bbE_N \left[  \bbE[X^\star| Y_2=\sfB+N] \right]}{\sigma_2^2}-  \frac{\bbE[X^\star| Y_1=\sfB]}{\sigma_1^2}  
 \ge - \frac{\sfA}{\sigma_2^2}  -  \frac{\sfA}{\sigma_1^2}. \label{eq:Lower_Bound_on_CES}
 \end{align} 
Second, note that the condition in \eqref{eq:Condition_on_B} implies that 
	\begin{align}
0 \le 	\sfB\left(\frac{1}{\sigma_1^2}-\frac{1}{\sigma_2^2}\right)  - \frac{\sfA}{\sigma_2^2}  -  \frac{\sfA}{\sigma_1^2}. \label{eq:Positivity_from_condition}
\end{align}

Therefore, by using \eqref{eq:First_representation_h} together with \eqref{eq:Lower_Bound_on_CES} and \eqref{eq:Positivity_from_condition}, we arrive at
	\begin{align}
	&\max_{|z|\le \sfB} |\breve{h}(z)| \ge \left|\breve{h}(\sfB) \right|\notag\\
	 &= \hspace{-0.1cm} \left| \hspace{-0.05cm}   \frac{ \bbE \hspace{-0.05cm}\left[  \bbE[X^\star| Y_2=  \hspace{-0.05cm} \sfB+N  \hspace{-0.05cm}] \right]   \hspace{-0.05cm} -  \hspace{-0.05cm}\sfB}{\sigma_2^2}  \hspace{-0.05cm} -   \hspace{-0.05cm} \frac{\bbE[X^\star| Y_1=\sfB]-\sfB}{\sigma_1^2}  \hspace{-0.05cm} \right|  \hspace{-0.05cm}  \sigma_1^2 f_{Y_1}(\sfB)\\
	& \ge   \left(	\sfB\left(\frac{1}{\sigma_1^2}-\frac{1}{\sigma_2^2}\right)  - \frac{\sfA}{\sigma_2^2}  -  \frac{\sfA}{\sigma_1^2} \right)   \sigma_1^2 f_{Y_1}(\sfB)\\
	& \ge   \hspace{-0.05cm}  \left( \hspace{-0.05cm}	\sfB\left(\frac{1}{\sigma_1^2}-\frac{1}{\sigma_2^2} \hspace{-0.05cm}\right)  - \frac{\sfA}{\sigma_2^2}  -  \frac{\sfA}{\sigma_1^2} \right)   
\hspace{-0.1cm} \frac{\sigma_1^2 }{\sqrt{2\pi \sigma_1^2}}	 \exp\left( \hspace{-0.1cm} -\frac{(\sfB+\sfA)^2}{2\sigma_1^2} \hspace{-0.1cm}\right),
	\end{align}
where in last bound we have used  Jensen's inequality to arrive at
\begin{align}
f_{Y_1}(\sfB) &= \bbE\left[\phi_{\sigma_1}(\sfB-X^\star)\right] \\
&= \frac{1}{\sqrt{2\pi \sigma_1^2}}	\bbE\left[ \exp\left(-\frac{(\sfB-X^\star)^2}{2\sigma_1^2}\right)\right] \\
&\ge \frac{1}{\sqrt{2\pi \sigma_1^2}}	 \exp\left(-\frac{(\sfB+\sfA)^2}{2\sigma_1^2}\right). \label{eq:lb_max_h_2}
\end{align}
This concludes the proof. 
\end{proof}

With Lemma~\ref{lem:moduls_upper_bound} and Lemma~\ref{lem:moduls_lower_bound} at our disposal we are now ready to used Tijdeman's Number of Zeros Lemma  to provide an upper bound on the number of mass points:
\begin{align}
&\rmN\left([-  R,  R],h(\cdot)\right)\\
&\le \rmN\left( \mathcal{D}_{R},\breve{h}(\cdot) \right) \label{eq:Extension_to_complex_bound}\\
 &\le  \min_{s> 1,\, t > 0 } \left\{ \frac{\log  \frac{\max_{|z| \le (st+s+t)R}|\breve h (z)| }{\max_{|z| \le t R} |\breve h (z)|}}{\log s}  \right\} \label{frm:lem: bd on the no of zeros of analytic func} \\
      &\le  \log  \frac{\frac{  \rme^{ \frac{(2\rme+1)^2R^2}{2\sigma_1^2} } }{\sqrt{2\pi\sigma_1^2}} \left( a_1 (2\rme+1)^2R^2 +a_2(2\rme+1)R+a_3 \right)
 }{ \left( c_1 R   - c_2 \sfA     \right) 
 \frac{  \exp\left(  -\frac{(R+\sfA)^2}{2\sigma_1^2} \hspace{-0.1cm}\right) }{\sqrt{2\pi \sigma_1^2}} } \label{eq:choosing_s_and_t}\\
  &= \frac{(2\rme+1)^2R^2}{2\sigma_1^2}+ \frac{(R+\sfA)^2}{2\sigma_1^2}   \notag\\
  & \quad +   \log  \frac{  a_1 (2\rme+1)^2R^2 +a_2(2\rme+1)R+a_3 
 }{  c_1 R   - c_2 \sfA      } \\
 &= \frac{(2\rme+1)^2(d_1 \sfA +d_2)^2}{2\sigma_1^2}+ \frac{( (d_1+1)\sfA+d_2)^2}{2\sigma_1^2}   \notag\\
  &  +   \log  \frac{  a_1 (2\rme+1)^2(d_1 \sfA +d_2)^2 +a_2(2\rme+1)(d_1 \sfA +d_2)+a_3 
 }{ (c_1d_1-c_2) \sfA +c_1 d_2   }  \label{eq:inserting_R_def}\\
 & \le   b_1  \frac{\sfA^2}{\sigma^2_1}+ b_2 +\log \frac{b_3\sfA^2+b_4 \sfA+b_5}{ b_6 \sfA+b_7}, \label{eq:square_bound}\\
  & \le   b_1  \frac{\sfA^2}{\sigma^2_1}+ O( \log(\sfA) ) ,  \label{eq:Big_O_bound}
      \end{align}
\eqref{eq:Extension_to_complex_bound} follows since extending to large domain can only increase the number of zeros; \eqref{frm:lem: bd on the no of zeros of analytic func} follows by the Tijdeman's Number of Zeros Lemma;      \eqref{eq:choosing_s_and_t} follows by choosing $s=\rme$ and $t=1$ and using bounds in Lemma~\ref{lem:moduls_upper_bound} and Lemma~\ref{lem:moduls_lower_bound}; \eqref{eq:inserting_R_def} follows using the value of $R$ in \eqref{eq:Bound_on_R}; \eqref{eq:square_bound} using the bound $(a+b)^2 \le 2 (a^2+b^2)$ and defining
\begin{align}
b_1&= (2\rme+1)^2d_1^2+(d_1+1)^2\\
&=  (2\rme+1)^2 \left( \frac{\sigma_2+\sigma_1}{ \sigma_2-\sigma_1} \right)^2+ \left(\frac{\sigma_2+\sigma_1}{ \sigma_2-\sigma_1}+1 \right)^2  \\
b_2&= \frac{((2\rme+1)^2+1)d_2^2}{\sigma_1^2} \\
&= \frac{((2\rme+1)^2+1)}{\sigma_1^2}   \frac{ \frac{\sigma_2^2-\sigma_1^2}{\sigma_2^2}+2C_s}{ \frac{1}{\sigma_1^2}-\frac{1}{\sigma_2^2} } \\
&=((2\rme+1)^2+1) \left (1+ 2\frac{\sigma_2^2}{\sigma_2^2-\sigma_1^2} C_s \right)\\
b_3&=2 (2 \rme+1)^2 a_1 d_1^2\\
&=    2 (2 \rme+1)^2 \frac{ 3 \sigma_1^2}{ \sigma_2^2 \sqrt{\sigma_2^2-\sigma_1^2}} \left(\frac{\sigma_2+\sigma_1}{ \sigma_2-\sigma_ 1} \right)^2\\
b_4&=(2 \rme+1) d_1  a_2\\
&=  (2 \rme+1)  \frac{\sigma_2+\sigma_1}{ \sigma_2-\sigma_ 1} \left( \frac{ \sqrt{2} \sigma_1^2}{  \sqrt{ \sigma_2^2} \sqrt{\sigma_2^2-\sigma_1^2}} +2 \right) \\
b_5&=2(2\rme+1)^2 a_1d_2^2+  (2\rme +1) a_2 d_2 +a_3\\
&=2(2\rme+1)^2  \frac{ 3 \sigma_1^2}{ \sigma_2^2 \sqrt{\sigma_2^2-\sigma_1^2}} \left(  \frac{ \frac{\sigma_2^2-\sigma_1^2}{\sigma_2^2}+2C_s}{ \frac{1}{\sigma_1^2}-\frac{1}{\sigma_2^2} } \right) \notag\\
&+  (2\rme +1) \left(\frac{ \sqrt{2} \sigma_1^2}{  \sqrt{ \sigma_2^2} \sqrt{\sigma_2^2-\sigma_1^2}} +2 \right) \sqrt{ \frac{ \frac{\sigma_2^2-\sigma_1^2}{\sigma_2^2}+2C_s}{ \frac{1}{\sigma_1^2}-\frac{1}{\sigma_2^2} } }   \notag\\
& +\frac{\sigma_1^2}{\sqrt{\sigma_2^2-\sigma_1^2}} \cdot  \sqrt{  |\log(2\pi\sigma^2_2)|^2 +  \frac{ 24  (\sigma_2^2-\sigma_1^2)^2 }{\sigma^4_2}  + \pi^2}   \\
b_6&= c_1d_1-c_2\\
&= \frac{\sigma_2^2-\sigma_1^2 }{\sigma_2^2} \frac{\sigma_2+\sigma_1}{ \sigma_2-\sigma_ 1}  -  \frac{\sigma_2^2+\sigma_1^2 }{\sigma_2^2} =2 \frac{\sigma_1}{\sigma_2}\\
b_7&=c_1 d_2\\
&=  \frac{\sigma_2^2-\sigma_1^2 }{\sigma_2^2}\sqrt{ \frac{ \frac{\sigma_2^2-\sigma_1^2}{\sigma_2^2}+2C_s}{ \frac{1}{\sigma_1^2}-\frac{1}{\sigma_2^2} } };
\end{align}
and  \eqref{eq:Big_O_bound} follows from the fact that the  $b_1,b_3,b_4$ and $b_6$ coefficients  do not depend $\sfA$ and the fact that the coefficients  $b_2,b_5$ and $b_4$, while do  depend on $\sfA$ through $C_s$, do not grow with $\sfA$.  The fact that $C_s$ does not grow with $\sfA$ follows from the bound in \eqref{eq:Expression_Cs_power_G_Wire}.

\section{Conclusion} 
\label{sec:Conclusion}

This works has focused on deriving upper bounds on the number of mass points of secrecy-capacity-achieving distribution. 

The upper bounds in  Theorem~\ref{thm:Main_Results} are generalizations of  the upper bounds on the number of points presented in \cite{DytsoAmplitute2020} in the context of a point-to-point additive white Gaussian noise (AWGN) channel with an amplitude constraint.  Indeed, if we let $\sigma_2 \to \infty$, while keeping $\sigma_1$ and $\sfA$ fixed, then the wiretap channel reduces to the AWGN point-to-point channel.

An interesting future direction would be to  find a matching implicit lower bound in \eqref{eq:Implicit_Upper_BOund}.  In \cite{DytsoAmplitute2020} such a matching lower bound was found and shown  to be tight with a multiplicative factor of two from the upper bound.  
 These results effectively show that the oscillation theorem (see Theorem~\ref{thm:OscillationThoerem}) is a strong enough tool for producing upper bounds on the  cardinality of secrecy-capacity-achieving distributions for point-to-point channels. A matching lower bound in the case of the wiretap channel would demonstrate that oscillation theorem can also play an important role in network information theory problems.  In \cite{DytsoAmplitute2020}, the key tool to finding the lower bound was the observation that a linear combination of $n+1$ distinct Gaussians with distinct variances can have at most  $2n$ zeros. In the wiretap channel, due to a more complicated structure of the function $g$ in~\eqref{eq:functiong}, it is not immediately clear how such an argument can be applied.  

It will also be interesting to augment an explicit upper bound on the number of points in \eqref{eq:Upper_Bound_Explicit} with a lower bound on the number of points.  A possible line of attack consists of the following steps: 
\begin{align}
C_s(\sigma_1, \sigma_2, \sfA) &= I(X^\star;Y_1)- I(X^\star; Y_2)\\
& \le H(X^\star)- I(X^\star; Y_2)\\
& \le \log( | \supp(P_{X^\star})| ) - I(X^\star; Y_2), \label{eq:Step_1}
\end{align} 
where the above uses the non-negativity of entropy and the fact that entropy is maximized by a  uniform distribution.  Furthermore, by using a suboptimal uniform (continuous) distribution on $[-\sfA,\sfA]$ as an input and the entropy power inequality, the secrecy-capacity can be lower-bounded by 
\begin{equation}
C_s(\sigma_1, \sigma_2, \sfA)  \ge   \frac{1}{2} \log \left( 1+ \frac{ \frac{2 \sfA^2}{ \pi \rme \sigma_1^2 } }{1+\frac{\sfA^2}{\sigma_2^2}} \right).  \label{eq:Step_2}
\end{equation} 
Combing  bounds in \eqref{eq:Step_1} and \eqref{eq:Step_2} we arrive at the following lower bound on the number of points:
\begin{align}
 | \supp(P_{X^\star})| \ge    \sqrt{1+ \frac{ \frac{2 \sfA^2}{ \pi \rme \sigma_1^2 } }{1+\frac{\sfA^2}{\sigma_2^2}}} \rme^{  I(X^\star; Y_2) } . 
 \end{align}
 At this point one needs to determine the behavior of $I(X^\star; Y_2)$.  A trivial lower bound on  $ | \supp(P_{X^\star})| $ can be found by lower bounding $ I(X^\star; Y_2)$ by zero. However, this lower bound  on $ | \supp(P_{X^\star})| $  does not grow with $\sfA$ while the upper bound increases with $\sfA$. 
A possible way of establishing a lower bound that is increasing in $\sfA$ is by showing that $ I(X^\star; Y_2) \approx  \frac{1}{2} \log \left(1+\frac{\sfA^2}{\sigma_2^2} \right) $.  However, because not much is known about the structure of the optimal input distribution $P_{X^\star}$, it is not immediately evident how one can establish such an approximation or whether it is valid.

\bibliographystyle{IEEEtran}
\bibliography{refs.bib}
\end{document}